\documentclass[conference]{IEEEtran}
\usepackage[T1]{fontenc}
\IEEEoverridecommandlockouts
\def\BibTeX{{\rm B\kern-.05em{\sc i\kern-.025em b}\kern-.08em
    T\kern-.1667em\lower.7ex\hbox{E}\kern-.125emX}}
\usepackage[dvipsnames]{xcolor}
\usepackage{color}

\definecolor{PPQAblue}{HTML}{4878CF} 
\definecolor{PPQAgreen}{HTML}{4878CF}
\definecolor{PPQAlightgreen}{HTML}{94CEB9}
\definecolor{PPQAorange}{HTML}{EB5A36}
\definecolor{PPQApink}{HTML}{DC3977} 
\definecolor{PPQApurple}{HTML}{7C1D6F}

\usepackage{pgfplots}
\pgfplotsset{compat = newest}
\usepackage{adjustbox}
\usepackage[utf8]{inputenc}
\usepackage{physics}
\usepackage{bm} 
\usepackage{bbm} 
\usepackage{braket}
\usepackage{dsfont}
\usepackage{amsmath}
\usepackage{amssymb}
\usepackage{amsthm}
\usepackage{thmtools} 
\usepackage{mathtools}
\usepackage[ruled,linesnumbered]{algorithm2e}
\SetArgSty{textnormal}
\usepackage{algpseudocode}
\usepackage{subfiles}
\usepackage{tikz}
\usepackage{graphicx}   
\usepackage{subcaption} 
\usetikzlibrary{quantikz2}
\usetikzlibrary{shapes.geometric,backgrounds,positioning,shapes.geometric,decorations.markings,decorations.pathreplacing,arrows,knots,hobby,angles,quotes,trees}
\tikzset{meter/.append style={draw, inner sep=5, rectangle, font=\vphantom{A}, minimum width=20, line width=.4,
 path picture={\draw[black] ([shift={(.1,.2)}]path picture bounding box.south west) to[bend left=50] ([shift={(-.1,.2)}]path picture bounding box.south east);\draw[black,-latex] ([shift={(0,.1)}]path picture bounding box.south) -- ([shift={(.3,-.1)}]path picture bounding box.north);}}} 
\makeatletter
\newcommand{\gettikzxy}[3]{%
  \tikz@scan@one@point\pgfutil@firstofone#1\relax
  \edef#2{\the\pgf@x}%
  \edef#3{\the\pgf@y}%
}
\makeatother
\usepackage{graphics}
\usepackage{graphicx}
\usepackage{float}
\usepackage[linesnumbered,ruled]{algorithm2e}
\usepackage[colorlinks]{hyperref}
\hypersetup{
    colorlinks  = true,
    citecolor   = PPQAgreen,
    linkcolor   = PPQAgreen,
    urlcolor    = PPQAblue
}

\declaretheorem[style=plain]{theorem}

\declaretheorem[style=plain,sibling=theorem]{lemma}

        \newcommand*{\signflip}{\mathcal{S}} 

\usepackage{tcolorbox}  
\usepackage{ragged2e}
\usepackage{enumitem}
\selectfont

\begin{document}

\title{Use case study: benchmarking quantum breadth-first search for maximum flow problems}

\author{\IEEEauthorblockN{Andreea-Iulia Lefterovici}
\IEEEauthorblockA{\textit{Institut f\"ur Theoretische Physik} \\
\textit{Leibniz Universit\"at Hannover}\\
Hannover, Germany \\
\url{https://orcid.org/0009-0007-4810-5927}}
\and
\IEEEauthorblockN{Lara Lelakowski}
\IEEEauthorblockA{\textit{Institut f\"ur Theoretische Physik} \\
\textit{Leibniz Universit\"at Hannover}\\
Hannover, Germany \\
\url{https://orcid.org/0009-0001-0177-6498}}
\and
\IEEEauthorblockN{Michael Perk}
\IEEEauthorblockA{\textit{Department of Computer Science} \\
\textit{TU Braunschweig}\\
Braunschweig, Germany \\
\url{https://orcid.org/0000-0002-0141-8594}}
}

\maketitle

\begin{abstract}
The maximum flow problem asks to find the largest possible flow from a source to a sink in a capacitated network. 
It arises frequently in scheduling, project selection, and as a core subroutine in broader optimisation tasks. 
Classically, it can be efficiently solved using Dinic's algorithm, which repeatedly performs breadth-first search (BFS) and blocking flow computations on the graph.
As a potential candidate for quantum speedups, these BFS subroutines can be naturally replaced with quantum BFS (qBFS), an instantiation of Grover’s search algorithm. 
In this paper, we evaluate the expected performance of qBFS on standard classical datasets.
These instances are too large to be solved directly on current quantum hardware, so we adopt a hybrid benchmarking approach: (i) we run a classical implementation of Dinic's algorithm and isolate the runtime of its BFS subroutines; (ii) we analytically estimate the minimum number of quantum cycles required to implement qBFS, where we use the classically logged data.
Our results indicate that achieving a practical quantum advantage for realistic problem sizes would translate to quantum gate operation times surpassing physical limitations.
\end{abstract}

\begin{IEEEkeywords}
hybrid benchmarking, maximum flow, quantum breadth first search
\end{IEEEkeywords}

\section{\label{section:Introduction}Introduction}
In recent years, there has been a lot of optimism that quantum computers could solve real-world optimisation problems, especially when it comes to flow problems such as maximal flow through a network~\cite{Ford1956MaximalFlowThroughtANetwork}, job assignment~\cite{Hopcroft1973MaximumMatchingsBipartiteGaphs}, traffic optimisation~\cite{Ahuja1993NetworkFlows}, and network reliability~\cite{Jongen2004ApplicationsMaxFlowMinCut}.
This optimism has been driven by results coming from complexity analysis.
But asymptotic improvements alone do not guarantee a good practical performance, and this already holds true for classical algorithms such as simplex~\cite{Dantzig1948LinearProgrammingInProblemsForTheNumericalAnalysisOfTheFuture}.

This raises a natural question: why should asymptotic improvements translate into practical advantages for quantum algorithms?
First, we cannot measure the practical performance of quantum algorithms for large data sets just by simply executing them on current quantum computers. A reason for this is the current infancy of the quantum hardware~\cite{Preskill2018quantumcomputingin}.
Second, benchmarking studies from Cade et al.~\cite{Cade2023QuantifyingGroverSpeedupsBeyondAsymptoticAnalysis}, Ammann et al.~\cite{Lefterovici2023RealisticRuntimeAnalysisForQuantumSimplexComputation} and Brehm and Weggemans~\cite{Brehm2026AssessingFaultTolerandQuantumAdvantage} have shown that asymptotic complexity can fail to capture runtime when using quantum algorithms, even under benevolent assumptions for quantum algorithms.
We refer to these studies as hybrid benchmarking\footnote{A collection of techniques that fall under this name can be found in \cite{Lefterovici2026HybridBenchmarking}.}.

Hybrid benchmarking techniques go beyond asymptotic complexity studies by providing explicit analytical expressions for resource estimations: gate, cycles or query counts. 
Within this framework, we consider quantum algorithms that have a corresponding classical counterpart, allowing for a one to one comparison. Data are collected by executing the classical algorithm and then fed into the analytical expressions for the quantum algorithm.
In this way, we obtain an estimate of its resource requirements. 
This approach enables a more realistic assessment of quantum algorithms, capturing instance-dependent overheads that asymptotic analysis cannot show.

In this paper, we extend the hybrid benchmarking technique and apply it to the maximum flow problem. 
Specifically, we investigate the practicality of a quantum breadth-first search (qBFS) as a subroutine within a maximum flow algorithm, evaluated on standard classical benchmark datasets that currently cannot be executed on quantum hardware.

Classically, there are a variety of algorithms for solving the maximum flow problem. 
Among the fastest known strongly polynomial algorithms is the one by Orlin~\cite{orlin2013maxflow}, which has an asymptotic running time of $O(VE)$. 
Nevertheless, we focus on algorithms that explicitly rely on BFS as a core component.
A natural choice is Dinic's algorithm~\cite{dinic1970}, which constructs level graphs using BFS and computes blocking flows efficiently.
Other algorithms in this category include Edmonds--Karp~\cite{edmonds1972}, Malhotra, Kumar, and Maheshwari~\cite{malhotra1978} and Galil~\cite{galil1978flow}. 

We investigate a quantum version of Dinic’s algorithm, as proposed by Ambainis and Spalek~\cite{AmbainisSpalek2005QuantumAlgorithmsForMatchingAndNetworkFlows}, in which the BFS procedure is replaced by qBFS. 
The qBFS uses Grover’s algorithm \cite{Grover1996AFastQuantumMechanicalAlgorithmForDatabaseSearch} to perform the levelling of the graph.
However, since the BFS subroutine is implemented identically across all the classical approaches~\cite{dinic1970,edmonds1972,malhotra1978,galil1978flow}, the specific choice of algorithm does not affect our hybrid benchmarking framework.
Consequently, Dinic's algorithm provides a suitable and practically efficient representative, allowing us to isolate and evaluate the implementation cost of qBFS without loss of generality.

\begin{figure*}[htbp]
    \centering
    \begin{subfigure}[b]{0.45\textwidth}
        \centering
        \includegraphics[width=\textwidth]{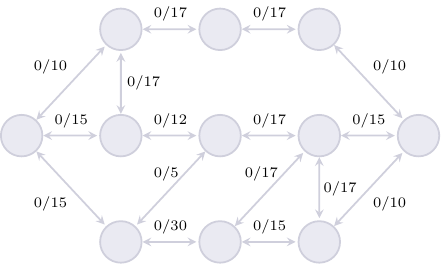} 
        \caption{}
        \label{fig:a}
    \end{subfigure}
    \hfill
    \begin{subfigure}[b]{0.45\textwidth}
        \centering
        \includegraphics[width=\textwidth]{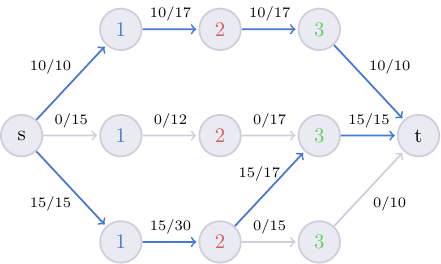}
        \caption{}
        \label{fig:b}
    \end{subfigure}

    \begin{subfigure}[b]{0.45\textwidth}
        \centering
        \includegraphics[width=\textwidth]{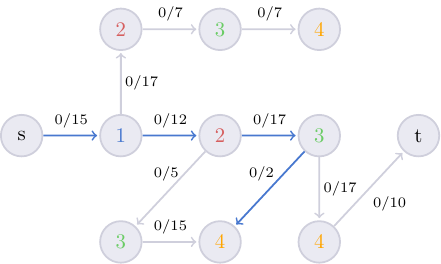}
        \caption{}
        \label{fig:c}
    \end{subfigure}
    \hfill
    \begin{subfigure}[b]{0.45\textwidth}
        \centering
        \includegraphics[width=\textwidth]{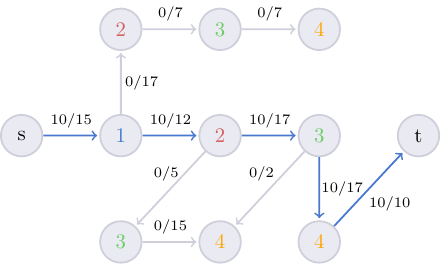}
        \caption{}
        \label{fig:d}
    \end{subfigure}

    \caption{Application of Dinic's algorithm to find the maximum flow for the graph shown in (a). The BFS finds, as show in (b), three layers, depicted in blue, red and green, respectively.
    The first path (up) is found by performing a DFS and gives a flow value of $f_1 = \min(10, 17, 17, 10) = 10$.
    The second path (down) yields a flow of $f_2 = \min(15, 30, 17, 15) = 15$.
    At this point, no additional paths can be found, implying that a blocking flow has been reached.
    The first block of iteration stops.
    The second block of iteration is depicted in (c) and (d), and takes as an input an updated version of the original the graph, where the remaining capacities have been adjusted.
    For simplicity, we removed the edges with remaining capacity equal to zero.
    Dinic's algorithm reiterates, resulting in a different layered graph that has four levels.
    As illustrated in (c), following the blue line leads to a dead-end, as the sink has not been reached.
    In such situations, the algorithm backtracks, while pruning the infeasible edges, and continues until the sink is reached, as shown in (d). 
    The path yields a flow $f_3 = \min(15, 12, 17, 15, 17, 10) = 10$.
    The maximum flow $f = f_1 + f_2 + f_3 = 35$ has been reached, because the edges reaching the sink are saturated.}
    \label{figure:MaxFlowExample}
\end{figure*}


\


\section{\label{section:Preliminaries}Preliminaries}
\subsection{Flow networks}
A \textit{network} is a directed graph $G(V,E)$ with $V$ vertices, $E$ edges, and a function $c$ that assigns a non-negative integer capacity to each edge.
A \textit{flow network} is a network where two of the vertices are labeled as source $s$ and sink $t$.

A \textit{flow} is a function $f$ satisfying two constraints: (i) the flow $f$ of an edge cannot exceed its capacity $c$; (ii) the sum of an incoming flow must equal the sum of the outgoing flow at a vertex $u$, except for the source and the sink vertices.
The value of a flow is the total flow leaving the source. A \textit{maximum flow }is a flow of maximum possible value.

The \textit{residual capacity} of an edge is the difference between its capacity and the flow going through it.
A \textit{residual network} $G_R$ contains all the vertices and edges from before, but where the capacities are updated to the residual capacities.
An \textit{augmenting path} is a path from $s$ to $t$ in $G_R$ along the edges whose residual capacity is positive.
A \textit{blocking flow} of a layered residual network is a flow such that every path from $s$ to $t$ is saturated on at least one edge.

A \textit{layered network} of $G$ is a network whose vertices are ordered into layers, such that the edges always go from layer $\ell$ to layer $\ell + 1$.




\begin{algorithm}
\caption{QSearch}\label{alg:qsearch}
\SetKwFunction{QSearch}{QSearch}
\SetKwProg{Function}{Function}{}{}
\Function{\QSearch{$f:L\rightarrow\{0,1\}$, $\varepsilon >0$}}{
    $\lambda \gets 6/5$, $m \gets \lambda$\;
    $s\gets 1$, $k \gets 1$,
    $s_{max}\gets\lceil\log_3{1/\varepsilon}\rceil$, $k_{max}\gets k_*+4$\;
    $k_*=\Bigl\lceil \log_\lambda\frac{\abs{L}}{2\sqrt{\abs{L}-1}}\Bigr\rceil$\;
    \While{$s\le s_{max}$}{
        \While{$k\le k_{max}$}{
            Prepare superposition of all vertices\;
            Choose $j$ uniformly at random in $[0,\lfloor m\rfloor]$\;
            Apply $Q^j$\;
            Measure the register, let $y$ be the outcome;
            \If{$f(y)=1$}{
                \Return $y$\;
            }
            \Else{
                $k\gets k+1$\;
                $m \gets \min(\lambda m, \sqrt{\abs{L}})$\;
            }
        }
        $s\gets s+1$\;
    }
    \Return No marked element\;
}

\end{algorithm}

\subsection{QSearch}
QSearch is a generalisation of Grover's algorithm \cite{Grover1996AFastQuantumMechanicalAlgorithmForDatabaseSearch, Boyer1998TightBoundsOnQuantumSearching} and is designed to perform an unstructured search for finding one of the $t$ marked elements within a list $L$ of elements. 
The search is performed by iteratively applying the Grover operator $$Q \coloneqq \text{H}^{\otimes n}  \signflip_{0} \text{H}^{\dagger \otimes n} \signflip_{f},$$ where $\text{H}$ is a Hadamard gate, $\signflip_{0}$ is a phase flip around all-zero state and $\signflip_{f} \ket{\bm{x}} \coloneqq (-1)^{f(\bm{x})} \ket{\bm{x}}$ is the phase oracle,
until one of these $t$ marked elements is found.
We summarise the procedure in Algorithm \ref{alg:qsearch}.

\subsection{Quantum breadth-first search}
We consider the problem of layering the vertices of a connected directed graph $G = (V, E)$ with a starting vertex $s \in V$.
The goal is to assign layers $\ell : V \rightarrow N$ to its vertices such that
$\ell(s) = 0$ and $\ell(y) = 1 + min_{x:(x,y)\in E}\ell(x)$ otherwise.

This layering problem is exactly what BFS solves: starting from $s$, it explores the graph level by level, assigning each vertex the smallest distance from $s$ by visiting all neighbours at distance $\ell$ before moving to distance $\ell+1$.

BFS maintains a queue of discovered vertices and processes them in order, ensuring that each vertex is assigned the correct layer, having a runtime complexity $O(V+E)$.

Ambainis and \v{S}palek \cite{AmbainisSpalek2005QuantumAlgorithmsForMatchingAndNetworkFlows} proposed a quantum version of the BFS, with a runtime $O\left(V^{3 / 2} \log V\right)$ in the adjacency model and in time $O(\sqrt{V E} \log V)$ in the list model.
The difference between BFS and qBFS lies in how the neighbours of a vertex are discovered.
Classical BFS enumerates all neighbours explicitly, while qBFS repeatedly calls QSearch to find neighbours.
We summarise the method in Algorithm \ref{alg:qBFS}.

\begin{algorithm}
\caption{Quantum breadth-first search}\label{alg:qBFS}
\SetKwFunction{qBFS}{qBFS}
\SetKwFunction{QSearch}{QSearch}
\SetKwProg{Function}{Function}{}{}
\SetKwRepeat{Do}{do}{while}%

\Function{\qBFS}{$G=(V, E)$, $a \in V$}{
    $\ell(a) \gets 0$\;
    $\ell(x) \gets \infty$ for $x \neq a$\;
    Create a one-entry queue $W \gets \{a\}$\;

    \While{$W \neq \emptyset$}{
        $x \gets$ pop($W$)\;
        
        \While{True}{
            $y \gets $ \QSearch$(\varepsilon)$\;
            \If{$y$ is not found}{
                \textbf{break}\;
            }
            $\ell(y) \gets \ell(x) + 1$\;
            Push $y$ to $W$\;
        }
    }
}
\end{algorithm}

\section{Dinic's algorithm and its quantum variant}
Dinic's algorithm is one of the widely algorithms to solve the maximum flow problem, having a strongly polynomial runtime of $O(V^2E)$, but being extremely fast in practice.
The algorithm works by guiding augmenting paths from $s$ to $t$ using a layered graph.
We summarise the steps of the algorithm:
\begin{enumerate}
    \item The algorithm uses BFS to construct a layered network.
    \item Then, it uses multiple depth-first search (DFS) procedures to identify augmenting paths from the source (s) to the sink (t) until a blocking flow is reached, by always going from layer $\ell$ to layer $\ell + 1$.
\end{enumerate}
In \autoref{figure:MaxFlowExample} we provide an example of a flow network with 11 vertices and 15 edges on which we illustrate the action of Dinic's algorithm that finds a maximum flow of $f = 35$.

The quantum version of Dinic's algorithm replaces the BFS in step one with qBFS, as suggested by Ambainis and \v{S}palek  \cite{AmbainisSpalek2005QuantumAlgorithmsForMatchingAndNetworkFlows}.
They showed that a maximal flow in a network with integer capacities at most $C$ can be found in time $O\left(V^{13 / 6} \cdot C^{1 / 3} \log V\right)$ in the adjacency model and in time $O\left(\min \left(V^{7 / 6} \sqrt{E} \cdot U^{1 / 3}, \sqrt{V C} E\right) \log V\right)$ in the list model.


\section{\label{section:Method}Methods}

\begin{figure*}
    \centering
    \includegraphics[width=\linewidth]{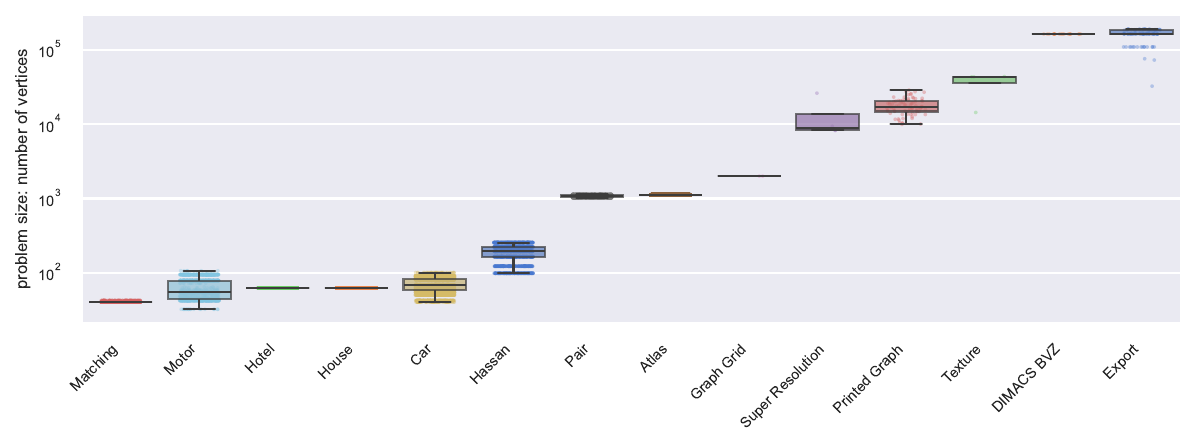}
    \caption{Standard maximum flow benchmark dataset \cite{Jensen2022, Verma_Batra} with sizes between 50 and 300,000 vertices.}
    \label{fig:problem_instances}
\end{figure*}

\begin{lemma}
\label{lemma:gatecount}
Let $L$ be a list of length $\abs{L}$ with $t$ marked items. 
A single Grover iteration requires $2\abs{L}$ cycles.
\end{lemma}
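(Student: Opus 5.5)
We count cycles in the cost model that hybrid benchmarking studies of this kind typically use, which we make explicit here. The model has three ingredients. Single-qubit layers such as $\text{H}^{\otimes n}$ act in parallel and are counted as one cycle. The phase oracle $\signflip_f$ and the reflection $\signflip_0$ are each implemented as a sequence of multi-controlled operations. Each item of the list (each basis state $\ket{\bm{x}}$, $\bm{x}\in L$) contributes one cycle to each of these reflections. With $n=\lceil\log_2\abs{L}\rceil$ qubits, a reflection about a set of basis states is thus lower-bounded by the number of basis states it must resolve. The plan is to decompose $Q = \text{H}^{\otimes n}\signflip_0\text{H}^{\dagger\otimes n}\signflip_f$ into its four factors, bound the cycle cost of each, and add the bounds.

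The first step is the oracle $\signflip_f$. The paper assumes no structure on $f$, so we treat it as given by a table over $L$. Implementing $(-1)^{f(\bm x)}$ then takes, for each $\bm x\in L$, one multi-controlled $Z$ conditioned on $\ket{\bm x}$, or an identity slot if $f(\bm x)=0$. Since the marked set is unknown and the circuit must be data-oblivious, we charge all $\abs{L}$ entries, which gives $\abs{L}$ cycles independently of $t$. This is why $t$ appears in the hypothesis but not in the count.

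The second step is the diffusion part $\text{H}^{\otimes n}\signflip_0\text{H}^{\dagger\otimes n}$. We will argue that within the same model the reflection $\signflip_0$ also costs $\abs{L}$ cycles. The cleanest way to see this is to view $\text{H}^{\otimes n}\signflip_0\text{H}^{\dagger\otimes n} = 2\ket{\psi}\bra{\psi}-I$, with $\ket\psi$ the uniform superposition over $L$, as a reflection whose cost matches that of an oracle table over $L$. This gives $\abs{L}$ cycles. The two Hadamard layers run in parallel with adjacent operations, or are absorbed into them, so they add nothing to the leading term. Summing the two steps gives $\abs{L}+\abs{L}=2\abs{L}$ cycles.

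The main obstacle is justifying the per-item charge for $\signflip_0$. Standard decompositions implement a single multi-controlled $Z$ in $O(n)$ depth, not $\abs{L}$. The count $2\abs{L}$ therefore rests entirely on adopting the uniform accounting convention, in which each reflection over an $\abs{L}$-element list is charged one cycle per element. I would state this convention explicitly as an assumption of the benchmarking model, chosen so that the oracle and the diffusion are treated symmetrically. I would then note that the lemma is an accounting identity within that model rather than a tight gate-complexity bound.
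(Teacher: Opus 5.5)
There is a genuine gap. You reach $2\abs{L}$ by postulating the cost of each component rather than deriving it, and the step you flag as the main obstacle is exactly where the argument breaks. With $n=\lceil\log_2\abs{L}\rceil$ qubits, $\signflip_0$ is one $(n-1)$-controlled $Z$ conjugated by $X$ gates, which, as you note yourself, has depth $O(n)$. Declaring a ``uniform accounting convention'' that charges it $\abs{L}$ cycles assumes the conclusion instead of proving it.

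The oracle charge of $\abs{L}$ cycles is likewise an arbitrary and pessimistic choice, and it runs against the lemma's purpose of counting the \emph{minimum} number of cycles. Your claim that the Hadamard layers are absorbed into neighbouring operations is also unjustified. Each layer acts on all the qubits the adjacent reflection acts on, so it cannot run in parallel with it. Since the lemma states an exact count, you also cannot discard these layers as lower-order terms.

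The missing idea is the encoding the paper uses: one qubit per vertex, so the register has $\abs{L}$ qubits rather than $\lceil\log_2\abs{L}\rceil$. The linear cost then comes from the diffusion alone. $\signflip_0$ is an $(\abs{L}-1)$-controlled $Z$, which the paper decomposes into $2(\abs{L}-2)$ CNOTs and one CZ that cannot be parallelised, i.e.\ $2\abs{L}-3$ cycles.

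The oracle is charged a single cycle under the benevolent assumption that it is one gate. Each Hadamard layer is one cycle because all its Hadamards act in parallel. This gives $1+1+(2\abs{L}-3)+1=2\abs{L}$. So the whole $\abs{L}$-dependence sits in the multi-controlled gate of the diffusion, not split evenly between oracle and diffusion. And $t$ drops out of the count because the oracle costs a constant, not because every list entry is charged.
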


\begin{proof}
We count the minimum number of cycles\footnote{We used the term cycle instead of the standard 'layer' in order to not confuse it with the layers of the graph. A cycle is a layer of gates that can be executed in parallel.} in a single Grover iteration, assuming one qubit per vertex, as follows:

\begin{enumerate}
    \item Phase oracle: $1$ cycle (under the benevolent assumption that it is implemented using only $1$ gate).
    \item First Hadamard cycle: $1$ cycle ($1$ Hadamard gate per qubit, acting in parallel).
    \item $(\abs{L}-1)$-controlled $Z$ gates are decomposed into $2(\abs{L}-2)$ CNOT gates and one CZ gate, giving $2\abs{L}-3$ cycles, as they cannot be further parallelised.
    \item Second Hadamard cycle: $1$ cycle ($1$ Hadamard gate per qubit, acting in parallel).
\end{enumerate}

Summing up,
\begin{align*}
1 + 1 + (2\abs{L}-3) + 1 = 2\abs{L}.
\end{align*}

Thus, each Grover iteration requires $2\abs{L}$ cycles, and multiplying by $N_Q(\abs{L}, t)$ gives the minimum number of cycles.
\end{proof}

\begin{lemma}
\label{lemma:qsearch-all}
Let $L$ be a list of length $\abs{L}$, with $t$ marked items.
The expected number of iterations that QSearch needs to find \emph{all} $t$ marked items is
\begin{align*}
N_Q(\abs{L}, t) &= \sum_{i=0}^{t-1} n_Q(\abs{L}-i,\, t-i),
\end{align*}
where
\begin{align*}
n_Q(\abs{L}, t)&=\sum_{k=1}^{k_{max}}\frac{m_k}{2}\Bigl[\prod_{l=1}^{k-1}\frac{1}{2}+\frac{\sin(4(m_l+1)\theta)}{4(m_l+1)\sin(2\theta)}\Bigr], \\
k_{max}&=\Bigl\lceil \log_\lambda\frac{\abs{L}}{2\sqrt{\abs{L}-1}}\Bigr\rceil+4\,,
\end{align*}
with $\sin^2\theta=t/\abs{L}$, $m_k = \lfloor \min(\lambda^k, \sqrt{\abs{L}})\rfloor$, $\lambda=6/5$.
\end{lemma}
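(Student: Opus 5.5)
I will split the statement into two parts: the cost $n_Q(\abs{L},t)$ of a single successful QSearch run, and the cost of finding all $t$ marked items.

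\textbf{Single run.} Consider one pass of the inner loop of Algorithm~\ref{alg:qsearch}. In round $k$ the algorithm draws $j$ uniformly from $\{0,\dots,m_k-1\}$, with $m_k=\lfloor\min(\lambda^k,\sqrt{\abs{L}})\rfloor$, and applies $Q^j$. So the expected number of Grover iterations spent in round $k$ is about $m_k/2$.

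Round $k$ is only reached if rounds $1,\dots,k-1$ all failed. I will use the lemma of Boyer, Brassard, H{\o}yer and Tapp~\cite{Boyer1998TightBoundsOnQuantumSearching}. With $\sin^2\theta=t/\abs{L}$ and $j$ uniform in $\{0,\dots,m-1\}$, the success probability is $\frac12-\frac{\sin(4m\theta)}{4m\sin(2\theta)}$. Hence the failure probability is $\frac12+\frac{\sin(4m\theta)}{4m\sin(2\theta)}$.

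Since the random $j$ are chosen independently, the failure probabilities of successive rounds multiply. By linearity of expectation,
\[
\mathbb{E}[\#\text{iterations}]=\sum_{k\ge1}\frac{m_k}{2}\prod_{l=1}^{k-1}\Pr[\text{round } l \text{ fails}].
\]
I then cut the sum at $k_{max}=k_*+4$, where the algorithm itself stops (for a single outer repetition $s$). This gives the stated $n_Q$. The shift $m_l+1$ in the formula comes from the range $[0,\lfloor m\rfloor]$, which has $m_l+1$ values. I will check carefully that the $m_k/2$ prefactor and the $m_l+1$ inside the product are consistent with this range.

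\textbf{All marked items.} In qBFS, each element that has been found is removed from the search space, since a vertex that already has a layer is no longer unmarked-new. After $i$ successes the $(i{+}1)$-st QSearch therefore runs on a list of size $\abs{L}-i$ with $t-i$ marked items. By linearity of expectation over the $t$ sequential calls, $N_Q=\sum_{i=0}^{t-1}n_Q(\abs{L}-i,t-i)$.

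\textbf{Main obstacle.} The hardest step is justifying the product of failure probabilities together with the truncation. The truncation requires either conditioning on success within the first $k_{max}$ rounds or accepting $n_Q$ as the expected cost of one outer repetition. I would treat $n_Q$ as the expected iteration count of one outer repetition, which matches the algorithm as written. I would also note that the final unsuccessful QSearch, which certifies that no marked items remain, is not included in the count.
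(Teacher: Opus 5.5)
Your proposal is correct and takes the same approach as the paper. The paper simply cites~\cite{Cade2023QuantifyingGroverSpeedupsBeyondAsymptoticAnalysis, Lefterovici2026HybridBenchmarking} for $n_Q$, whose content is the Boyer--Brassard--H{\o}yer--Tapp averaging, independence across rounds and truncation at $k_{max}$ that you spell out, and it obtains $N_Q$ by the same remove-one-and-sum argument over the instances $(\abs{L}-i,\,t-i)$. One slip: your initial range $\{0,\dots,m_k-1\}$ is wrong, since Algorithm~\ref{alg:qsearch} draws $j$ from $\{0,\dots,m_k\}$, which gives mean exactly $m_k/2$ and $m_k+1$ values in the BBHT formula, so both factors come out exactly as stated.
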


\begin{proof}
The expression for $n_Q(\abs{L}, t)$ follows from~\cite{Cade2023QuantifyingGroverSpeedupsBeyondAsymptoticAnalysis, Lefterovici2026HybridBenchmarking}.

To find all $t$ marked items, QSearch is applied repeatedly. 
After each successful identification of a marked element, that element is removed from the list, reducing both the list size and the number of marked items by one. 
Hence, the expected number of iterations is given by summing the cost of finding one marked item over the successive instances $(\abs{L}-i, t-i)$, yielding
\begin{align*}
N_Q(\abs{L}, t) = \sum_{i=0}^{t-1} n_Q(\abs{L}-i,\, t-i).
\end{align*}
\end{proof}

\begin{figure*}
    \centering
    \includegraphics[width=\linewidth]{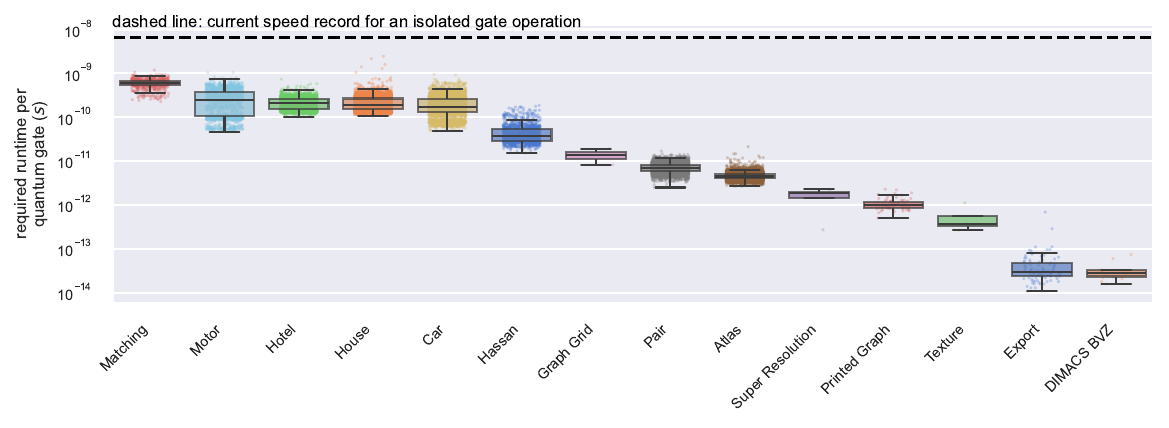}
    \caption{Minimum required time per quantum gate operation to match the execution time of BFS within Dinic's algorithm for standard classical benchmark datasets. The dashed line at $6.5 \time 10^{-9}$s is the current best achieved time for a gate operation \cite{chew2022ultrafast}. Since all required times fall below this threshold, achieving a better performance using qBFS is not feasible.}
    \label{fig:results_maxflow}
\end{figure*}

Note that $n_Q$ is the expected number of iterations for having a single successful Grover search when the number of marked elements $t$ is unknown.
For benchmarking purposes, however, we must assume that $t$ is known and computed classically via BFS; otherwise, the hybrid benchmarking method cannot be used. 
In reality, $t$ is unknown, which makes the analysis more complicated and the actual implementation of qBFS more subtle.

In the context of Dinic's algorithm, we need to find \emph{all} marked elements in each layer. 
At each BFS phase, the algorithm constructs a layered graph by assigning every vertex a level corresponding to its shortest-path distance from the source in the residual graph. 
The subsequent DFS phase pushes flow along paths that traverse all vertices in the layered graph. 
Missing even a single vertex would lead to an incomplete layered graph and an incorrect maximum flow.

Therefore, for each layer $\ell$ containing $t$ vertices out of $\abs{L}$ total vertices, QSearch must be applied until all $t$ marked vertices are found. 
The expected cost is $N_Q(\abs{L}, t)$ from \autoref{lemma:qsearch-all}. 
Summing over all layers and BFS applications yields the total expected number of quantum iterations required for the BFS component of Dinic's algorithm.

For simplicity, we assume from this point on that 
\textbf{one cycle} is implemented using exactly \textbf{one gate}.

Combining \autoref{lemma:gatecount} and \autoref{lemma:qsearch-all}, the total expected minimum gate count for finding all marked items is
\begin{equation}
G_Q(\abs{L}, t) = 2\abs{L} \cdot N_Q(\abs{L}, t).
\label{equation:gatecount}
\end{equation}

In the hybrid benchmarking approach, we classically collect the data needed to evaluate this gate count.
In our case, we use a standard implementation of Dinic's algorithm. 
Specifically, during each BFS phase of Dinic's algorithm, we record the number of vertices in each layer, which corresponds to $t$, the number of marked items that QSearch would need to find in that layer.
We further record the total number of vertices in the original graph, which corresponds to $\abs{L}$.

We assume that each application of qBFS would construct the same layered graph as the classical BFS.
Under this assumption, the classical layer sizes provide the exact inputs needed for \eqref{equation:gatecount} to compute the expected minimum gate count, without actually running the quantum algorithm.


\section{\label{section:Results}Results}
\subsection{Benchmark instances}

We use standard maximum flow benchmark dataset \cite{Jensen2022, Verma_Batra}, which is a standard test suite for classical algorithms comparison.
The instances are shown in \autoref{fig:problem_instances}.
The dataset contains more than 30,000 problems with a size ranging from 50 to approximately 300,000 vertices. 

All experiments were run on an Apple M3 processor with 16 GB RAM and Tahoe 26.3.1.
The datafiles and the source code are available here\footnote{\url{https://gitlab.ibr.cs.tu-bs.de/mperk/max-flow/}}.
The classical implementation of Dinic's algorithm is standard, i.e., no optimisation techniques.
The same BFS runtimes can be achieved or exceeded on any modern workstation.

We log the total number of vertices $\abs{L}$, and the runtime for each BFS step together with the number of vertices $t$ in each layer. We then combine \autoref{lemma:gatecount} and \autoref{lemma:qsearch-all} as in \eqref{equation:gatecount} to estimate the minimum number of gates required to find all marked elements, as explained in \autoref{section:Method}. 


\subsection{Required gate time}

In \autoref{fig:results_maxflow} we show the minimum gate time required for qBFS to match the classical BFS runtime on each instance.
The execution time for a quantum gate inside qBFS is obtained be taking the ratio between the runtime of each BFS step and the respective minimum quantum gate count.
Across the entire dataset, the required cycle times range from approximately $9 \times 10^{-10}$s to $10^{-14}$s.
The current physical record of $6.5 \times 10^{-9}$s exceeds these requirements for all the problems in our dataset. 
Even setting aside all other overhead (qubit connectivity, error correction, state preparation), achieving a practical quantum advantage on realistic maximum flow instances would directly translate into gate speeds that are not physically possible with any known technology.


\section{\label{section:DiscussionAndConclusion}Discussion and Conclusions}
In this work, we extend the hybrid benchmarking technique to evaluate the performance of quantum breadth-first search (qBFS) when used as a subroutine in Dinic's maximum flow algorithm. 
We use qBFS as a case study to assess the feasibility of replacing classical breadth-first search (BFS) in Dinic's maximum flow algorithm on large-scale, standard benchmark datasets. 
The level graphs obtained using BFS are optimal, since the maximum flow is reached in each phase. Hence, we assume that qBFS produces the same layer structure.
We analytically derived formulas to estimate the number of gates required by the quantum procedure.  
By inputting instance-specific data from classical runs of Dinic's algorithm into these equations, we obtained gate count estimates for qBFS and their respective execution times.

We observed that, under current and expected quantum hardware constraints, qBFS cannot outperform classical BFS for realistic maximum flow instances, as the required gate speeds are several orders of magnitude beyond what is physically possible.
Because the results are already negative, we do not consider additional overheads such as error correction.
This result reinforces a broader lesson: asymptotic quantum speedups do not automatically translate into practical advantages.

\section*{Acknowledgment}
This work was supported by the QDFG under Germany’s Excellence Strategy - EXC-2123 QuantumFrontiers-2 - 390837967 and SFB 1227 (DQ-mat), the Quantum Valley
Lower Saxony, and the BMBF projects ATIQ, SEQUIN, Quics, CBQD, QuBRA and ProvideQ. Helpful correspondence and discussions with Lennart Binkowski, Gabriela Ciolacu and Tobias J. Osborne are gratefully acknowledged.

\newpage
\bibliographystyle{unsrt}
\bibliography{main.bib}


\end{document}